\DeclareMathOperator*{\maximize}{maximize}
\DeclareMathOperator*{\argmin}{arg\,min}
\newcommand{\bsucceq}{\boldsymbol{\succeq}}
\newcommand{\eps}{\varepsilon}
\newcommand{\bbR}{\mathbb{R}}
\newcommand{\bbN}{\mathbb{N}}
\newcommand{\cR}{\mathcal{R}}
\newcommand{\cD}{\mathcal{D}}
\newcommand{\cC}{\mathcal{C}}
\newcommand{\bb}{{\mathbf{b}}}
\newcommand{\bx}{{\mathbf{x}}}
\newcommand{\by}{{\mathbf{y}}}
\newcommand{\bh}{{\mathbf{h}}}
\title{Constrained Serial Rule on the Full Preference Domain}
\author{Priyanka Shende \footnote{University of California, Berkeley; Email: priyanka.s@berkeley.edu}  }
\begin{document}

\maketitle

\begin{abstract}
We study the problem of assigning objects to agents in the presence of arbitrary linear constraints when agents are allowed to be indifferent between objects. Our main contribution is the generalization of the (Extended) Probabilistic Serial mechanism via a new mechanism called the Constrained Serial Rule. This mechanism is computationally efficient and maintains desirable efficiency and fairness properties namely constrained ordinal efficiency and envy-freeness among agents of the same type. Our mechanism is based on a linear programming approach that accounts for all constraints and provides a re-interpretation of the ``bottleneck'' set of agents that form a crucial part of the Extended Probabilistic Serial mechanism.

\keywords{random assignment; probabilistic serial mechanism; constrained ordinal efficiency; indifferences}
\end{abstract}

\newpage

\section{Introduction}
\label{sec:intro}
Allocating a number of indivisible objects among a set of agents in a \emph{fair} and \emph{efficient} manner is one of the most fundamental problems in applied and theoretical economics. Classical models for object allocation without monetary transfers such as the house allocation model (\cite{hylland1979efficient}) are now well understood and mechanisms that guarantee any maximal subset of attainable properties along the dimensions of \emph{fairness}, \emph{efficiency}, and \emph{incentive compatibility} are well known. However, many applications in practice impose additional constraints on the set of allowed allocations and designing allocation mechanisms in such constrained settings remains an ongoing challenge. In this work, we present a novel \emph{Constrained Serial Rule} mechanism that always obtains efficient and fair outcomes for object allocation under a large class of general constraints.

In the traditional object allocation model, a finite set of indivisible objects must be allocated to a set of agents based on the agents' preferences over those objects. In this paper, we restrict our attention to \emph{ordinal} mechanisms where the agents report only a preference ranking over objects\footnote{This is in contrast with \emph{cardinal} mechanisms where the agents report cardinal utilities for each object.}. This setting models a large number of real-world applications such as placement of students to public schools \citep{abdulkadirouglu2003school}, course allocation \citep{budish2011combinatorial}, organ donation \citep{roth2005pairwise}, and on-campus housing allocation \citep{chen2002improving} among others. In many of these applications, it is highly desirable that the mechanism is \emph{fair}, i.e. no agent is discriminated against and also \emph{efficient}, i.e. there is no other outcome that is preferred by all agents. Unfortunately, since objects are indivisible, it can be easily observed that no mechanism can be perceived as being fair ex-post. Randomization is, therefore,  often used as a tool to restore fairness from an ex-ante perspective. 

In the context of house allocation, \citet{bogomolnaia2001new} introduced the notion of \emph{ordinal efficiency}. A random assignment is said to be ordinally efficient if there exists no other assignment that stochastically dominates it. Indeed, they showed that the well-studied \emph{Random Priority} mechanism that orders agents in a uniformly random order and then allocates each agent her most preferred object from the set of remaining objects is not ordinally efficient and only satisfies a weaker notion of ex-post efficiency. 
The notion of \emph{envy-freeness} that requires each agent to prefer her own allocation to anyone else's allocation is often considered as the gold standard of fairness in many different settings such as resource allocation~\citep{foley1967resource}, cake-cutting~\citep{robertson1998cake}, and rent division~\citep{edward1999rental}. In the context of random assignment mechanisms, this property is formulated using the first-order stochastic dominance relation. In their seminal work, \cite{bogomolnaia2001new} introduced the \emph{Probabilistic Serial} mechanism that always produces ordinally efficient and envy-free outcomes. The probabilistic serial mechanism can be described as follows. At time zero, each agent begins ``eating'' her most preferred object. An object becomes unavailable once the total time spend by agents eating it equals one. Once an object becomes unavailable, all agents that were eating it, switch to eating their most preferred object among the ones still available. Finally, the probability that an agent receives some object is the time spent by that agent to eat the object.

Unfortunately, the probabilistic serial mechanism assumes that agents have strict preferences over objects, which is a fairly restrictive assumption in practice. Indeed, as discussed by \citet{katta2006solution} and \citet{erdil2017two}, ties in preferences are widespread in many practical applications. For example, agents may treat some objects as identical. Even when objects are all distinct, evaluating and ranking all objects may be computationally prohibitive, and agents may only reveal coarse rankings with indifferences. In their influential paper, \citet{katta2006solution} present the \emph{Extended Probabilistic Serial} mechanism that generalizes the probabilistic serial mechanism to the full preference domain and retains the desirable properties of ordinal efficiency and envy-freeness.

Widespread applicability of these mechanisms is hindered by the fact that these mechanisms assume that every random assignment is feasible. In a large number of practical applications, legal and policy requirements necessitate studying mechanisms where the set of feasible assignments is constrained in some way. For example, in the course allocation problem, there are often requirements on the minimum (and maximum) number of students assigned to a course. Similarly, in school choice applications, it is required to find assignments that maintain a minimum level of diversity (\cite{ehlers2014school}). In resident matching, it is often necessary for the allocation of doctors to hospitals to satisfy geographic constraints (\cite{kamada2015efficient}). In the refugee resettlement problem, objects represent settlement facilities and a feasible assignment must be such that the total demands of all agents assigned to a facility must be met by the total supply of resources at that facility (\cite{delacretaz2016refugee}). Similarly, in kidney matching applications (\cite{roth2005pairwise}), blood-type compatibility imposes constraints on feasible matchings. 
In this paper, we study the object allocation problem with arbitrary linear constraints on the set of feasible probabilistic assignments. Following the work of \citet{balbuzanov2019constrained}, this formalization supports an arbitrary set of constraints on ex-post allocations.

\subsection{Our Contributions}
\label{sec:contributions}
Our primary contribution is to generalize the probabilistic serial mechanism to the full preference domain and support arbitrary linear constraints on the set of feasible random assignments via a new mechanism called the \emph{Constrained Serial Rule}. Our mechanism is computationally efficient and only requires a running time that is polynomial in the number of constraints, agents, and objects.
For the classical unconstrained house allocation setup on the full preference domain, our mechanism coincides with the extended probabilistic serial mechanism of \citet{katta2006solution}. Various generalizations of the probabilistic serial mechanism have been proposed for multi-unit demand (\cite{kojima2009random}), specific type of constraints such as bi-hierarchical constraints \citep{budish2013designing}, type-dependent distributional constraints \citep{ashlagi2020assignment},  combinatorial demand \citep{nguyen2016assignment}, property rights with individual rationality \citep{yilmaz2010probabilistic}, and even arbitrary constraints on the ex-post allocations \citep{balbuzanov2019constrained}. Our \emph{constrained serial rule} unifies this literature and provides a common generalization of all these mechanisms and also provides an extension to the full preference domain.

We show that the \emph{constrained serial rule} maintains the desirable efficiency and fairness properties of the probabilistic serial mechanism even in our general constrained setting. In particular, our mechanism is constrained ordinally efficient. While it is easy to observe that arbitrary constraints rule out the existence of envy-free mechanisms, we show that the \emph{constrained serial rule} maintains a compelling notion of fairness. Intuitively, we say agents $i$ and $j$ are of the same \emph{type} if the constraint structure does not distinguish between the two agents. We show that the constrained serial rule mechanism guarantees envy-freeness among any pair of agents of the same type. However, our mechanism is not strategyproof or even weak-strategyproof. This is unsurprising since even in the unconstrained setting, weak-strategyproofness is incompatible with ordinal efficiency and envy-freeness on the full preference domain \citep{katta2006solution}.

\subsection{Other Related Work}
\label{sec:related}
There is a growing body of literature on assignment and matching mechanisms subject to constraints. Several studies have considered floor and ceiling constraints in the context of controlled school choice, college admissions, and affirmative action  \citep{kojima2012school,kominers2013designing,hafalir2013effective,hamada2016hospitals,fragiadakis2017improving,fleiner2016matroid,goto2015improving,westkamp2013analysis,ehlers2014school,echenique2015control,biro2010college,ashlagi2020assignment}. \cite{echenique2019constrained} consider arbitrary ex-post constraints as in \cite{balbuzanov2019constrained} and provide a pseudo-market equilibrium solution that is constrained ex-ante efficient and fair.

\cite{budish2013designing,pycia2015decomposing,akbarpour2014approximate} have studied the implementability of random matching mechanisms. \cite{budish2013designing} identify bi-hierarchical constraint structures as a necessary and sufficient condition for implementing a random assignment using lottery of feasible assignments. They also provide a generalization of the (extended) probabilistic serial mechanism in the case when there are no floor constraints. Indeed, our mechanism is able to accommodate bi-hierarchical constraint inequalities in the presence of non-zero floor constraints. \cite{akbarpour2014approximate} consider more general constraints beyond bi-hierarchical structures and show how feasible random assignments can be implemented approximately. \cite{pycia2015decomposing} provide sufficient conditions on the properties of random mechanisms that continue to be satisfied on the deterministic mechanisms when random mechanisms are decomposed as a lottery over these deterministic mechanisms. While the focus of our paper is not on implementability, we provide a small discussion of this in Section \ref{sec:implementability}.

\section{Model and Preliminaries}
We consider a finite set $N$ of agents and a finite set $O$ of objects. Let $n=|N|$ be the number of distinct agents and $\rho = |O|$ be the number of distinct objects. Every agent has a unit\footnote{Our results also generalize to the case when all agents demand $d \geq 1$ objects.} demand and each object $o \in O$ is supplied in $q_o \in \bbN$ copies. When objects are scarce, we can include the null object, $\emptyset$, in the set $O$, which is supplied in a quantity sufficient to meet the demand of all agents. That is, $q_{\emptyset} \geq |N|$. We can, therefore, without loss of generality, assume that $\sum_{o \in O} q_o \geq n$. Each agent $i \in N$ has a preference relation $\succeq_i$ on the set of objects in $O$. The preference $\succeq_i$ is assumed to be complete and transitive. In particular, we allow agents to be indifferent between any pair of objects in $O$. Let $E(\succeq)$ be the number of indifference classes within the preference $\succeq$. For any $\ell \leq E(\succeq_i)$, let $T_i(\ell)$ be the set of objects in the first $\ell$ indifference classes of the preference $\succeq_i$.
A set of individual preferences of all agents constitutes a preference profile $\bsucceq=(\succeq_i)_{i \in N}$. Let $\cR$ denote the set of all complete and transitive relations on $O$ and $\mathcal{R}^n$ be the set of all possible preference profiles.

A \emph{random assignment} of objects to agents is given by a vector $\bx =(x_{i,o})_{i \in N, o \in O} \in [0,1]^{n\rho}$ such that
\begin{align*}
    &\sum_{o \in O}x_{i,o} = 1 \quad\ \  \forall i \in N \\
    &\sum_{i \in N} x_{i,o} \leq q_o \quad \forall o \in O
\end{align*}
 In assignment $\bx$, every agent $i$'s \emph{allocation} is given by the sub-vector $\bx_i =(x_{i,o})_{o \in O}$, where the quantity $x_{i,o}$ is interpreted to be the probability with which object $o$ is assigned to agent $i$. Let $x_i(S) = \sum_{o \in S}x_{i,o}$ be agent $i$'s cumulative allocation for the set of objects in set $S$. An assignment is \emph{deterministic} whenever $x_{i,o} \in \{0,1\}$, i.e, every agent is assigned a single object with probability $1$. Let $\cD$ denote the set of all deterministic assignments and $\Delta \cD$ denote the set of all random assignments. 
A random assignment \emph{mechanism} is a mapping, $ \varphi:\mathcal{R}^n \rightarrow \Delta\mathcal{D}$, that associates each preference profile $\bsucceq \in \mathcal{R}^n$ with some random assignment $\bx \in \Delta \cD$.

We extend agents' preferences from the set of objects to the set of random allocations using the stochastic dominance relation. Given two random assignments $\bx$ and $\by$, allocation $\bx_i$ \emph{stochastically dominates} allocation $\by_i$ with respect to $\succeq_i$, denoted by $\bx_i sd(\succeq_i) \by_i$, if and only if $\sum_{o' \succeq_i o} x_{i,o'} \geq \sum_{o' \succeq_i o} y_{i,o'}$ for all $o \in O$. If the inequality is strict for some $o \in O$, then $\bx_i$ \emph{strictly} stochastically dominates $\by_i$, in which case we denote it by $\bx_i sd(\succ_i) \by_i$.

We now introduce a general class of constraints into our model. At any given preference profile, we assume that the set of feasible random assignments can be described as a convex polytope. Formally, at preference profile $\bsucceq$, the set of feasible random assignments $\Delta\cC(\bsucceq)$ is parameterized by a matrix $A = [a^c_{i,o}]_{1\leq c\leq m,\{i,o\} \in N \times O} \in \bbR^{m \times n\rho}$ and a vector $\boldsymbol{b}=[b^c]_{1\leq c\leq m} \in \bbR^m$, where $c$ is a generic constraint and $m$ is the number of constraints, and is defined as:
\begin{align*}
    \Delta\mathcal{C}(\bsucceq) = \{\bx \in \Delta \cD \mid A\bx \leq \boldsymbol{b} \}
\end{align*}

We assume that at each preference profile $\bsucceq$, the set of feasible random assignments is non-empty. That is, $\Delta \cC(\bsucceq) \neq \emptyset$. Such a formulation of the constraints enables us to apply our model to many different specific applications that we describe in Section \ref{sec:applications}. Let $\Delta \cC = \{\Delta C(\bsucceq)\}_{\bsucceq \in \cR^n}$ be the collection of constraint polytopes for all preference profiles. Given a collection of constraints $\Delta \cC$, a mechanism is \emph{feasible} if at every preference profile $\bsucceq$, $\varphi(\bsucceq) \in \Delta \cC(\bsucceq)$.

We next define the normative properties of efficiency and fairness in the presence of constraints. 

\begin{definition}[Constrained Ordinal Efficiency]
\label{def:ordinaleff}
A random assignment $\bx$ is \emph{constrained ordinally efficient} at a preference profile $\bsucceq$ and constraint set $\Delta \cC(\bsucceq)$ if there does not exist another random assignment $\bx' \in \Delta \cC(\bsucceq)$ such that $\bx'_isd(\succeq_i)\bx_i $ for all $i \in N$, with $\bx'_isd(\succ_i)\bx_i$ for  at least one $i \in N$. A mechanism $\varphi$ is \emph{constrained ordinally efficient} if for every preference profile $\bsucceq$, $\varphi(\bsucceq)$ is constrained ordinally efficient.
\end{definition}

The classic notion of fairness requires that no agent should envy the allocation received by some other agent. When faced with arbitrary feasibility constraints, it is easy to see that one cannot guarantee the existence of envy-free assignments. Therefore, we restrict ourselves to fairness comparisons between agents that belong to the same type. For a given constraint matrix $A$, we say agents $i$ and $j$ belong to the same type if for every object $o$, the variables $x_{i,o}$ and $x_{j,o}$ have the same coefficients in every constraint in $A$.

\begin{definition}[Agent Type]
\label{def:sametype}
Let $\Delta\mathcal{C}(\bsucceq) = \{\bx \in \Delta \cD \mid A\bx \leq \boldsymbol{b} \}$  where $A = [a^c_{i,o}]_{1\leq c\leq m,\{i,o\} \in N \times O}$ denote the constraint set a preference profile $\bsucceq$.
Two agents $i$ and $j$ are said to be of the \emph{same type} at this profile, if for every constraint $1 \leq c \leq m$, and for every object $o \in O$, we have $a^c_{i,o}=a^c_{j,o}$. 
\end{definition}

\begin{definition}[Envy-freeness among agents of the same type]
\label{def:envy-free}
A random assignment $\bx$ is \emph{envy-free among agents of the same type} if for every pair of agents $i,j \in N$ of the same type, $\bx_i sd(\succeq_i) \bx_j$. A mechanism $\varphi$ is \emph{envy-free for agents of the same type} if for every preference profile $\bsucceq$, $\varphi(\bsucceq)$ is envy-free among agents of the same type.
\end{definition}

\section{The Constrained Serial Rule}

We first give a brief intuitive description of the classical probabilistic serial mechanism~\citep{bogomolnaia2001new} in the simple house allocation model\footnote{Agents have strict preferences over objects and there are no additional constraints on the assignment.}.
The mechanism is best described as a continuous time procedure for $t\in [0,1]$: at each infinitesimal time interval $[t, t+dt)$, each agent $i$ consumes $dt$ amount of her most preferred object among the set of objects currently available. When this procedure terminates, the probability that an agent is assigned an object is given by the fraction of the object consumed by the agent. 

While attempting to extend the probabilistic serial mechanism to our general model on the full preference domain and with arbitrary constraints on the eventual random assignment, one faces two key challenges. First, when agents have strict preferences, each agent at any point in time has a unique most preferred object and hence the mechanism can simply allocate that object to the agent. On the other hand, when agents are indifferent between two or more objects, the mechanism can no longer uniquely identify an object to assign. \citet{katta2006solution} deal with this difficulty by constructing a flow graph where every agent points to her set of most preferred objects and then using a parametric flow formulation to find a set of \emph{bottleneck} agents and objects. Intuitively, the set of bottleneck agents are those that compete the most among themselves and the bottleneck objects are those desired by bottleneck agents. Once the bottleneck agents have been identified, the extended probabilistic serial mechanism allocates all bottleneck objects among these agents uniformly. As in the classic probabilistic serial rule, the mechanism then proceeds by each bottleneck agent simply pointing to her next most preferred object. A key observation is that the (extended) probabilistic serial mechanism attempts to assign each agent her most preferred object for as long as possible. In fact, as observed by \citet{bogomolnaia2015random}, one can provide a welfarist interpretation of the probabilistic serial rule as follows: for any agent $i$, let $x_i(\ell)$ be the total probability share of objects that agent $i$ receives for her top $\ell$ indifference classes; then the probabilistic serial mechanism leximin maximizes the vector of all such shares $(x_i(\ell))_{i \in N, k \leq E(\succeq_i)}$. We crucially use this observation in our \emph{constrained serial rule} mechanism.

The second challenge arises due to the presence of arbitrary constraints on the space of feasible random assignments. We observe that at any step of the mechanism, the (extended) probabilistic serial mechanism always assigns each agent her most preferred object (at that time). In the presence of constraints, however, it is essential to not allow agents to obtain their most preferred object if such an allocation leads to infeasibility. More precisely, the mechanism needs to \emph{look ahead} in time as it 
builds up a partial allocation to ensure that there is at least one way to extend the partial allocation to a feasible assignment. Our \emph{constrained serial rule} mechanism uses a linear program that explicitly accounts for all constraints and at every step maintains a feasible solution.

We now describe the linear program that is used crucially by the mechanism. Let $S \subseteq N$ denote any subset of agents and let $\ell_i \in \{1, 2, \ldots, E(\succeq_i)\}$ for each agent $i \in N$ denote an indifference threshold. Let $F$ be a set of triples that denotes prior promised assignments. Formally, a triple $(i, \ell, \gamma) \in F$ indicates that agent $i$ must receive a total probability share of at least $\gamma$ from her top $\ell$ indifference classes.
The linear program $LP(S, F, (\ell_i)_{i \in N})$ specified in Figure \ref{fig:lp} finds a random assignment $\bx \in \bbR_+^{n\rho}$ that satisfies all constraints specified by $F$ in addition to the imposed feasibility constraints and maximizes the total probability share that each agent $i \in S$ receives from her top $\ell_i$ indifference classes. The variables $h_i$ for each $i \in N$ represent the total probability share received by agent $i$ for objects in her top $\ell_i$ indifference classes. Constraints \eqref{eq:lp_const1} and \eqref{eq:lp_const2} enforce the requirement that the linear program maximizes $\min_{i \in S} h_i$. Constraints \eqref{eq:lp_const3} and \eqref{eq:lp_const4} enforce that the obtained random assignment is feasible, and finally constraint \eqref{eq:lp_const5} requires the assignment to be consistent with the requirements specified by the triples in $F$.

\begin{figure}
    \centering
    \begin{align}
    LP(S,F,(\ell_i)_{i \in S}) = &\quad \maximize_{\bx, \bh, \lambda}          & \lambda  &      &  \label{eq:lp_obj}\\
    &\quad \mathrm{s.t.} &   h_i &\geq \lambda &  \forall i \in S  \label{eq:lp_const1}\\
    &  &  \sum_{o \in T_i(\ell_i)} x_{i,o}  & \geq h_i & \forall i \in N \label{eq:lp_const2}\\
                &  &  A\bx   & \leq \boldsymbol{b}  &   \label{eq:lp_const3}\\
                &  & \bx &\in \Delta \mathcal{D}& \label{eq:lp_const4}\\
                &  &   \sum_{o \in T_i(\ell)} x_{i,o}  &\geq \gamma & \forall (i,\ell,\gamma) \in F \label{eq:lp_const5}\\
                &  & \bh, \lambda & \geq 0 &
\end{align}
    \caption{Linear Program used by the Constrained Serial Rule}
    \label{fig:lp}
\end{figure}

\subsection{Mechanism}

We are now ready to describe the constrained serial rule mechanism formally.
The mechanism proceeds in multiple rounds. We initialize $F^1 = \emptyset$ and for each agent $i \in N$, we initialize $\ell_i^1 = 1$. Intuitively, $\ell_i^t$ denotes the threshold indifference class for agent $i$ in round $t$. In other words, in round $t$, we consider the total probability share of objects in $T_i(\ell_i^t)$ assigned to agent $i$. Let $h_i^t$ denote the total probability share of objects in the top $\ell_i^t$ indifference classes assigned to agent $i$.
We use the linear program described in Figure \ref{fig:lp} to find a feasible random assignment such that $\min_{i} h_i^t$ is maximized. The mechanism then identifies a set $B^t$ of \emph{bottleneck} agents. Intuitively, these are the set of agents who are responsible for the $LP$ objective in this round to be only $\lambda^t$. Since we are dealing with arbitrary linear constraints, our definition of \emph{bottleneck} agents needs to be more subtle than that of \citet{katta2006solution}.
We define $B_t$ to be a \emph{minimal} set of agents such that solving the linear program while only attempting to maximize the utility of agents in that set also yields the same objective value of $\lambda^t$. Our definition of bottleneck agents is central to the validity of the mechanism as well as its efficiency and fairness properties. Finally, once the bottleneck set of agents has been identified, we update $F^t$ to guarantee that in future rounds each agent $i \in B^t$ obtains at least the promised $\lambda^t$ probability share from her top $\ell_i^t$ indifference classes and then increment the threshold $\ell_i^t$ for all such agents. The mechanism then proceeds to the next round and the process continues until every agent receives a total probability share of 1. 
Algorithm \ref{alg:genps} provides a complete formal description of the algorithm.

\begin{algorithm}[htb]
{\bf Initialize:}\\
$t \leftarrow 1$\;
$\ell_i^t \leftarrow 1,\quad \forall i \in N$\;
$F^t \leftarrow \emptyset$\;
\For{$t = 1, 2, \ldots$}
{
$(\bx^t, \bh^t, \lambda^t) \leftarrow LP(N, F^t, (\ell_i^t)_{i \in N})$\;
\eIf{$\lambda^t = 1$}
{
$\bx \leftarrow \bx^t$\;
terminate\;
}
{
Find a minimal set $B^t$ such that $LP(B^t, F^t, (\ell_i^t)_{i \in N})$ has objective value $\lambda^t$\;
Update $F^{t+1} = F^t \cup \{(i, \ell_i^t, \lambda^t) \mid i \in B^t\}$\;
Update $\ell_i^{t+1} = \begin{cases}
                    \ell_i^t + 1 \quad \forall i \in B^t\\
                    \ell_i^t \quad\quad\quad \text{otherwise}
                    \end{cases}$
}
}
\caption{The Constrained Serial Rule}
\label{alg:genps}
\end{algorithm}

We provide a simple example run of Algorithm \ref{alg:genps} on a constrained allocation problem to illustrate our constrained serial rule.
\begin{example}
Consider an allocation problem with three agents $N = \{1,2,3\}$ and three objects $O = \{a,b,c\}$. Our goal is to obtain an assignment that satisfies the usual bistochastic constraints, i.e. $\sum_{o \in O} x_{i,a} = 1\ \forall i \in N$ and $\sum_{i \in N} x_{i,a} = 1\ \forall o \in O$. In addition, there are two additional constraints as follows: $x_{1,a} + x_{2,a} \leq 0.5$ and $x_{1,c} + x_{2,c} \geq 0.5$.
The agents' preferences are given as follows.
\begin{align*}
    &\succ_1: \quad\quad a \quad\quad b \quad\quad c\\
    &\succ_2: \quad\quad \{a,b\} \quad\quad\ \  c\\
    &\succ_3: \quad\quad c \quad\quad b \quad\quad a
\end{align*}

We illustrate how our mechanism works through this example. In the first round, we initialize $\ell_1^1 = \ell_2^1 = \ell_3^1 = 1$ and solve the linear program to find a feasible solution that maximizes $\lambda^1 = \min\{x_{1,a}, x_{2,a} + x_{2,b}, x_{3,c}\}$. With the given constraints, one potential optimum solution assigns $x_{1,a} = x_{2,b} = x_{3,c} = 0.5$ to obtain $\lambda^1 = 0.5$. We then proceed to find the set of bottleneck agents. In this example, either of the singleton sets with agents $1$ or $3$ could be the bottleneck set. This is because the constraint $x_{1,a} + x_{2,a} \leq 0.5$ prevents agent $1$ from receiving a larger probability share of object $a$. Similarly, the constraint $x_{1,c} + x_{2,c} \geq 0.5$ prevents agent $3$ from receiving a larger probability share of object $c$. Suppose we select agent $1$ to be the bottleneck agent. Then, we increment $\ell_1^2 = 2$ and maintain $\ell_2^2 = \ell_3^2 = 1$. We also set $F^2 = (1,1,0.5)$ to signify that agent $1$ must continue to obtain $0.5$ amount of her top choice.

In the second round, we again solve the linear program to find a feasible solution that now maximizes $\lambda^2 = \min\{x_{1,a} + x_{1,b}, x_{2,a} + x_{2,b}, x_{3,c}\}$. As described earlier, agent $3$ is unable to receive more than $0.5$ amount of object $c$ in any feasible solution and hence we obtain $\lambda^2 = 0.5$. In this case, agent $3$ is the unique bottleneck agent and as earlier we increment her indifference threshold and add the triple $(3,1,0.5)$ to $F^3$.
Similarly, in the third round, we solve the linear program to find a feasible solution that maximizes $\lambda^3 = \min\{x_{1,a} + x_{1,b}, x_{2,a} + x_{2,b}, x_{3,c} + x_{3,b}\}$. However, the constraints $x_{1,a} + x_{2,a} \leq 0.5$ and $x_{1,a} + x_{2,a} + x_{3,a} = 1$ together imply that $x_{3,a} \geq 0.5$ and thus $x_{3,c} + x_{3,b} \leq 0.5$. So yet again, we have $\lambda^3 = 0.5$ and agent $3$ is the bottleneck agent and we increment her indifference threshold and add the triple $(3,2,0.5)$ to $F^4$.

In the fourth round, the linear program attempts to find a feasible solution that respects all the constraints in $F^4$ and maximizes $\lambda^4 = \min\{x_{1,a} + x_{1,b}, x_{2,a} + x_{2,b}, x_{3,c} + x_{3,b} + x_{3,a}\}$. In this case, one potential optimum solution assigns $x_{1,a} = 0.5, x_{1,b} = 0.25, x_{2,b} = 0.75, x_{3,c} = 0.5, x_{3,a} = 0.25$ to obtain $\lambda^4 = 0.75$. Since the constraint $x_{1,a} + x_{2,a} = 0.5$ is already tight and object $b$ has been fully allocated, both agents $1$ and $2$ are in the bottleneck set in this round. We increment the indifference threshold of both the agents to obtain $\ell_1^5 = 3$ and $\ell_2^5 = 2$ and add the triples $(1,2,0.75)$ and $(2,1,0.75)$ to $F^5$.

Finally, in the fifth round, we again solve the linear program to find a feasible solution that respects all the constraints in $F^5$ and  maximizes $\lambda^5 = \min_{i \in N}\{x_{i,a} + x_{i,b} + x_{i,c}\}$. Since any feasible solution satisfies $x_{i,a} + x_{i,b} + x_{i,c} = 1$, we obtain $\lambda^5 = 1$ and the mechanism terminates. The outcome of the mechanism is any feasible solution that satisfies all constraints in $F^5$. For this example, the unique such solution is given by the following random assignment.
\begin{table}[h]
    \centering
    \begin{tabular}{llll}
 & a & b & c \\
\bottomrule
1 & 0.5 & 0.25 & 0.25 \\
2 & 0 & 0.75 & 0.25 \\
3 & 0.5 & 0 & 0.5 \\
        \bottomrule
       \end{tabular}
    \label{tab:my_label}
\end{table}
\end{example}

\subsection{Properties}

We first show that the constrained serial rule presented in Algorithm \ref{alg:genps} is well-defined and always produces a feasible random assignment.

\begin{proposition}
\label{prop:terminates}
Algorithm \ref{alg:genps} terminates and always produces a feasible random assignment.
\end{proposition}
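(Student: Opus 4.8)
The plan is to establish the two assertions --- that Algorithm \ref{alg:genps} halts, and that the vector it returns is a feasible random assignment --- by a single induction on the round index $t$ that maintains two invariants: (I1) $\ell^t_i \le E(\succeq_i)$ for every $i\in N$, so the thresholds never overflow and $LP(N,F^t,(\ell^t_i)_i)$ is well defined; and (I2) the program $LP(N,F^t,(\ell^t_i)_i)$ is feasible. Feasibility of the output is then immediate: the algorithm returns $\bx=\bx^t$ only when the optimal value $\lambda^t$ equals $1$, and $\bx^t$ satisfies \eqref{eq:lp_const3}--\eqref{eq:lp_const4}, i.e.\ $A\bx^t\le\boldsymbol b$ and $\bx^t\in\Delta\cD$, which together are exactly membership in $\Delta\cC(\bsucceq)$. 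Along the way I would note the routine facts that, given (I1)--(I2), the program is also bounded (all of $x_{i,o}$, $h_i$, $\lambda$ lie in $[0,1]$, using $N\neq\emptyset$ for the bound on $\lambda$), so an optimal $(\bx^t,\bh^t,\lambda^t)$ exists, and that a minimal bottleneck set $B^t$ exists since $S=N$ already achieves objective value $\lambda^t$ and there are only finitely many subsets of $N$.

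For the inductive step of (I2) I would show that the previous round's assignment $\bx^t$, padded with $\bh=\boldsymbol 0$ and $\lambda=0$, is feasible for $LP(N,F^{t+1},(\ell^{t+1}_i)_i)$: it still satisfies \eqref{eq:lp_const3} and every triple inherited from $F^t$ in \eqref{eq:lp_const5}; constraint \eqref{eq:lp_const2} only becomes easier because $T_i(\ell^{t+1}_i)\supseteq T_i(\ell^t_i)$; and each newly added triple $(i,\ell^t_i,\lambda^t)$ with $i\in B^t$ holds because \eqref{eq:lp_const1}--\eqref{eq:lp_const2} in round $t$ force $\sum_{o\in T_i(\ell^t_i)}x^t_{i,o}\ge h^t_i\ge\lambda^t$. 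The base case $F^1=\emptyset$ is precisely the standing hypothesis $\Delta\cC(\bsucceq)\neq\emptyset$.

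The crux --- and the step I expect to be the main obstacle --- is the inductive step of (I1), which reduces to the lemma: if $\lambda^t<1$ then every agent in a minimal bottleneck set $B^t$ has $\ell^t_i<E(\succeq_i)$. The idea is that an agent $i$ with $\ell^t_i=E(\succeq_i)$ has $T_i(\ell^t_i)=O$, so $\sum_{o\in T_i(\ell^t_i)}x_{i,o}=1$ at every feasible point; hence her $h_i$ can always be raised to $1$, and deleting $i$ from the objective set never changes the program's value, i.e.\ $LP(S,F^t,(\ell^t_i)_i)=LP(S\setminus\{i\},F^t,(\ell^t_i)_i)$ for every $S\ni i$. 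Thus if such an $i$ lay in a minimal $B^t$ with $|B^t|\ge 2$, the set $B^t\setminus\{i\}$ would still attain $\lambda^t$, contradicting minimality; and if $B^t=\{i\}$, then $LP(\{i\},\dots)=\max\{h_i:\bx\text{ feasible}\}=1$, contradicting $\lambda^t<1$. I would also record that $B^t\neq\emptyset$ whenever $\lambda^t<1$, since $LP(\emptyset,\dots)$ is unbounded above and so cannot have value $\lambda^t$. Consequently, for $i\in B^t$ the update $\ell^{t+1}_i=\ell^t_i+1$ stays $\le E(\succeq_i)$, and for $i\notin B^t$ the threshold is unchanged; the base case $\ell^1_i=1$ is clear.

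Termination then follows from a potential argument: set $\Phi^t:=\sum_{i\in N}\ell^t_i$; invariant (I1) gives $\Phi^t\le\sum_{i\in N}E(\succeq_i)$, while every non-terminating round increases $\Phi^t$ by $|B^t|\ge1$. Hence there are at most $\sum_{i\in N}E(\succeq_i)-n$ non-terminating rounds, after which a round with $\lambda^t=1$ must occur and the algorithm stops, returning a point of $\Delta\cC(\bsucceq)$ as observed above.
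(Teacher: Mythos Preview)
Your proposal is correct and follows essentially the same approach as the paper's own proof: both argue that the LP in each round is feasible (you via $(\bx^t,\boldsymbol 0,0)$, the paper via the previous optimal triple), that an agent with $\ell^t_i=E(\succeq_i)$ cannot lie in a minimal bottleneck set, and hence that at least one threshold increments each non-terminating round, forcing termination; feasibility of the output is immediate from \eqref{eq:lp_const3}--\eqref{eq:lp_const4}. Your write-up is in fact somewhat more careful than the paper's---you explicitly handle the cases $B^t=\emptyset$ and $|B^t|=1$, and you phrase the key step as ``same optimal value'' rather than the paper's looser ``actually identical''---but the underlying argument is the same.
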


\begin{proof}
We first observe that in any step $t$, the $LP$ always has a feasible solution. This is because, for any $t > 1$, the solution $(x^{t-1}, h^{t-1}, \lambda^{t-1})$ from the previous iteration continues to be a feasible solution; whereas for $t=1$, the existence of a feasible solution is guaranteed since the assignment constraints are assumed to be satisfiable.

In any step $t$, it is easy to see that the bottleneck set of agents $B^t$ is guaranteed to exist by observing that the set of all agents $N$ satisfies the required constraint by definition. Further, we observe that any agent $i^* \in N$ such that $\ell_{i^*}^t = E(\succeq_{i^*})$ does not appear in the bottleneck set $B^t$. This is because, for any such agent $i^* \in N$, the linear programs $LP(B^t, F^t, (\ell_i^t)_{i \in N})$ and $LP(B^t \setminus \{i^*\}, F^t, (\ell_i^t)_{i \in N})$ are actually identical and by definition $B^t$ is the \emph{minimal} set that satisfies the condition.
Thus in any step $t$, we increment the indifference threshold of at least one agent. Finally, by definition of feasible assignment, when $\ell_i^t = E(\succeq_i),\ \forall i \in N$, we may set $h^t_i = \sum_{o \in O} x^t_{i,o} = 1$ for all $i \in N$ and hence obtain $\lambda^t = 1$ and the algorithm terminates. Since any agent $i \in N$ has $E(\succeq_i) \leq \rho$, the mechanism terminates in at most $n\rho$ rounds.

Finally, since the linear program in Figure \ref{fig:lp} explicitly maintains the constraints $A\bx \leq \boldsymbol{b}$ and $x \in \Delta \cD$, the outcome is guaranteed to be feasible.
\end{proof}

We then proceed to show two technical lemmas. The first lemma shows that even though we allow arbitrary linear constraints on the random assignment, the projection of the feasibility polytope on the $\bh$ and $\lambda$ variables satisfies a desirable monotonicity property. Specifically, the constraints on the $\bh$ variables can be expressed as a set of linear upper-bound constraints with only non-negative coefficients.

\begin{lemma}
\label{lem:polytope}
Fix any step $t$ of the algorithm. Let $P$ be the polytope defined by the constraints in $LP(B^t, F^t, (\ell_i^t)_{i \in N})$ and let $Q = \{(\bh, \lambda) \mid \exists (\bx, \bh, \lambda) \in P\}$ be its projection. Then there exists a non-negative matrix $\tilde A$ and a non-negative vector $\tilde b$ such that
\[Q = \begin{cases}
h_i \geq \lambda, \quad \forall i \in B^t\\
\tilde A \bh \leq  \boldsymbol{\tilde{b}} \\
\bh, \lambda \geq 0
\end{cases}\]
\end{lemma}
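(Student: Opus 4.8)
The plan is to show that the projection $Q$ of the polytope $P$ onto the $(\bh,\lambda)$ coordinates is a \emph{down-set} in the $\bh$ coordinates (suitably interpreted), and that this structural fact forces the non-$\{h_i \geq \lambda\}$ facets of $Q$ to have the stated sign pattern. Concretely, $P$ is a polyhedron, so its projection $Q$ is a polyhedron by Fourier--Motzkin elimination; write $Q = \{(\bh,\lambda) : R[\bh;\lambda] \leq \boldsymbol{r}\}$ for some matrix $R$ and vector $\boldsymbol{r}$. I would separate the rows of $R$ into those that mention $\lambda$ and those that do not. For the latter, the claim is exactly that (after normalizing) each such row has the form $\tilde A_k \bh \leq \tilde b_k$ with $\tilde A_k \geq 0$ and $\tilde b_k \geq 0$; for the former, I would argue they are implied by the constraints $h_i \geq \lambda$ together with $\bh,\lambda \geq 0$ and the $\bh$-only rows, so they can be dropped (or absorbed).

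The key monotonicity observation is: if $(\bh,\lambda) \in Q$ and $\bh' \leq \bh$ componentwise with $\bh' \geq 0$ and $0 \leq \lambda' \leq \min(\lambda, \min_i h'_i)$ appropriately, then $(\bh',\lambda') \in Q$. This holds because a feasible $(\bx,\bh,\lambda) \in P$ can be modified: the defining constraints of $P$ are \eqref{eq:lp_const1} $h_i \geq \lambda$, \eqref{eq:lp_const2} $\sum_{o \in T_i(\ell_i)} x_{i,o} \geq h_i$, the feasibility constraints \eqref{eq:lp_const3}--\eqref{eq:lp_const4}, the promise constraints \eqref{eq:lp_const5}, and non-negativity. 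Decreasing $h_i$ to $h_i'$ only relaxes \eqref{eq:lp_const1} and \eqref{eq:lp_const2} and leaves $\bx$ feasible, so $(\bx,\bh',\lambda') \in P$ as long as $\lambda' \leq h_i'$ for $i \in B^t$ and $\lambda' \geq 0$. Thus $Q$, as a set in the $\bh$-coordinates, is closed under coordinatewise decrease (down to $0$). A standard fact about polyhedra that are down-sets under the coordinatewise order is that they are described entirely by constraints of the form ``nonnegativity'' and ``$\tilde A \bh \leq \tilde b$ with $\tilde A \geq 0$'': any valid inequality $\sum_i \alpha_i h_i \leq \beta$ defining a facet not of the nonnegativity type must have all $\alpha_i \geq 0$ (otherwise decreasing the $i$-th coordinate with negative $\alpha_i$ would eventually violate it, contradicting down-closedness), and then $\beta \geq 0$ follows by evaluating at $\bh = \boldsymbol 0$, which lies in $Q$ (take any feasible $\bx$, set $\bh = \boldsymbol 0$, $\lambda = 0$). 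This gives the matrix $\tilde A$ and vector $\tilde b$.

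The remaining point is to handle the rows of $R$ that involve $\lambda$ and confirm the $\{h_i \geq \lambda : i \in B^t\}$ constraints appear exactly. Since $\lambda$ enters $P$ only through \eqref{eq:lp_const1}, i.e. $\lambda \leq h_i$ for $i \in B^t$, and through $\lambda \geq 0$, in the projection $\lambda$ is bounded above by $\min_{i \in B^t} h_i$ and below by $0$; Fourier--Motzkin elimination would only have been needed to \emph{remove} $\bx$, not $\lambda$, so the $\lambda$-constraints of $Q$ are literally $h_i \geq \lambda$ for $i \in B^t$ and $\lambda \geq 0$, with no others. I would make this precise by observing that eliminating the $\bx$ variables from $P$ produces constraints purely among $(\bh,\lambda)$; any such constraint is a nonnegative combination of the original constraints, and since $\lambda$ appears in the originals only with coefficient pattern ``$+1$ times $\lambda$ on the small side of $h_i \geq \lambda$'' or in $\lambda \geq 0$, the eliminated constraints either retain $\lambda$ in exactly the form $h_i \geq \lambda$ (if that original row is used alone) or are $\lambda$-free. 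Hence $Q$ has precisely the claimed form.

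The main obstacle I anticipate is the careful bookkeeping in the last paragraph: arguing rigorously that Fourier--Motzkin elimination of the $\bx$-variables cannot produce a ``new'' inequality coupling $\lambda$ to $\bh$ in some unexpected way, and that the projection of the down-closure argument genuinely yields \emph{nonnegative} $\tilde b$ rather than merely nonnegative $\tilde A$. The down-closedness of $Q$ in the $\bh$-coordinates is the crux; once that is nailed down, extracting the sign pattern is routine polyhedral geometry. A subtle point to check is that $\bh = \boldsymbol 0$ is genuinely in $Q$ — this uses that $P$ is nonempty, which follows because in step $t$ the LP is feasible (as shown in the proof of Proposition~\ref{prop:terminates}), and from any feasible $(\bx,\bh,\lambda)$ we can push $\bh$ and $\lambda$ down to $\boldsymbol 0$ while keeping $\bx$ fixed.
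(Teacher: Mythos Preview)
Your proposal is correct and follows essentially the same approach as the paper: both arguments use Fourier--Motzkin elimination of the $\bx$ variables, observe that the $\lambda$-constraints survive untouched because $\lambda$ never appears in any constraint involving $\bx$, establish downward closedness of $Q$ in the $\bh$ coordinates by keeping $\bx$ fixed while decreasing $\bh$ (and $\lambda$), and then deduce $\tilde A \geq 0$ from the down-set property and $\tilde b \geq 0$ from nonemptiness of $Q$. The only cosmetic difference is that the paper argues $\tilde b \geq 0$ by noting that a negative entry in $\tilde b$ together with $\tilde A \geq 0$ would force $Q = \emptyset$, whereas you evaluate at $\bh = \boldsymbol 0$; these are the same observation.
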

\begin{proof}
By the Fourier-Motzkin theorem, we know that $Q$ is also a polytope that can be obtained from $P$ using Fourier-Motzkin elimination. Since the only constraints involving $\lambda$ in $P$ do not contain any $\bx$ variables, those constraints remain unchanged in polytope $Q$. Let $\tilde A = [\tilde a^c_{i}]$ and $\tilde b = [\tilde b^c]$ denote the minimal linear constraints on the variables $\bh$ obtained after Fourier-Motzkin elimination. We now need to show that $\tilde A$ and $\tilde b$ are non-negative.

We first observe that the polytope $Q$ is downward closed on the $\bh$ variables, i.e. for any $\bh' \leq \bh$, if $(\bh, \lambda) \in Q$ then there exists a $\lambda' \leq \lambda$ such that $(\bh', \lambda') \in Q$. This is because, by definition, if $(\bh, \lambda) \in Q$ then there exists an $\bx$ such that $(\bx, \bh, \lambda) \in P$. Further, observing the polytope $P$ (refer to Figure \ref{fig:lp}), it is clear that $(\bx, \bh', \lambda')$ also belongs to $P$ where $\lambda' = \min_{i \in B^t} h'_i$, and thus $(\bh', \lambda') \in Q$.

Now, suppose for contradiction that there exists a constraint $c$ such that the entry $\tilde a^c_j < 0$ for some $j \in N$. Since this constraint is not redundant, there exists a vector $\tilde \bh \in \bbR_+^{n}$ such that the $c$th constraint is the only binding constraint, i.e., $\sum_{i} \tilde a^c_i \tilde h_i = \tilde b^c$ and further $\tilde h_i > 0$ for all $i \in N$. Define $\bh' \in \bbR_+^{n}$ as $h'_i = \tilde h_i,\ \forall i \in N \setminus \{j\}$ and $h'_j = 0$. By definition $\bh' < \tilde \bh$, and yet $\sum_{i} \tilde a^c_i h'_i > \tilde b^c$ and thus $\bh' \notin Q$ which is a contradiction.

Finally if the matrix $\tilde A$ is non-negative and the vector $\tilde b$ has a negative entry, then the polytope $Q$ must be empty. Since $Q$ is guaranteed to be non-empty, we must have that $\tilde b$ is non-negative.
\end{proof}

The following lemma demonstrates the importance of our definition of \emph{bottleneck} agents. Informally, it states that if $B^t$ is the set of bottleneck agents at some step $t$ of the algorithm, then no agent in $B^t$ can obtain a higher total allocation for her top $\ell_i^t$ indifference classes without hurting some other agent in $B^t$. This lemma is crucial to proving the constrained serial rule produces constrained ordinally efficient outcomes, and also to demonstrate its fairness properties.

\begin{lemma}
\label{lem:bottleneck}
Fix any step $t$ of the algorithm and let $(\bx^t, \bh^t, \lambda^t)$ denote an optimal solution to $LP(B^t, F^t, (\ell_i^t)_{i \in N})$. Then
  $\nexists\ \by = (y_{i,o})_{i \in N, o \in O} \in \mathbb{R}^{n\rho}_+$ such that 
 \begin{enumerate}[nosep]
    \item $A \by \leq b$, $\by \in \Delta \cD$
    \item $y_i(T_i(\ell)) \geq \gamma,\ \forall (i,\ell,\gamma) \in F^t$
    \item $y_i(T_i(\ell^t_i)) \geq \lambda^t,\ \forall i \in B^t$ with atleast one strict inequality.
\end{enumerate}
\end{lemma}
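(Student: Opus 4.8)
The plan is to argue by contradiction, using the minimality of the bottleneck set $B^t$ together with the monotonicity structure established in Lemma~\ref{lem:polytope}. Suppose such a $\by$ exists. Consider its projection: let $g_i = y_i(T_i(\ell_i^t))$ for each $i \in N$, and let $\mu = \min_{i \in B^t} g_i$. By item~(3), $g_i \geq \lambda^t$ for all $i \in B^t$, so $\mu \geq \lambda^t$; and because at least one inequality in item~(3) is strict while all others are weak, in fact we will want to push beyond just $\mu \geq \lambda^t$. Since $\by$ together with these $g_i$ and $\mu$ satisfies all the constraints of $LP(B^t, F^t, (\ell_i^t)_{i \in N})$ (items~(1) and~(2) give feasibility and the $F^t$-constraints, and $g_i \geq \mu$ by construction), the triple $(\by, \bg, \mu)$ lies in the polytope $P$ of Lemma~\ref{lem:polytope}, hence $(\bg, \mu) \in Q$. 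But $\mu = \lambda^t$ is the optimal objective value of $LP(B^t, \dots)$, so we cannot have $\mu > \lambda^t$; thus $\mu = \lambda^t$ and $g_i = \lambda^t$ for at least one $i \in B^t$. So the strict inequality in item~(3) does \emph{not} occur at that particular agent — it must occur at some agent who is not simultaneously the minimizer.

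The key step is then to exploit this to contradict the minimality of $B^t$. Let $j^* \in B^t$ be an agent for which $g_{j^*} > \lambda^t$ (this is the agent witnessing the strict inequality, and by the previous paragraph $j^*$ is not a minimizer of $\bg$ over $B^t$). I claim that $B^t \setminus \{j^*\}$ already achieves objective value $\lambda^t$ in its own linear program $LP(B^t \setminus \{j^*\}, F^t, (\ell_i^t)_{i \in N})$, which contradicts $B^t$ being minimal. To see the claim, note first that $\lambda^t$ is clearly an \emph{upper bound} on the objective of $LP(B^t \setminus \{j^*\}, \dots)$, since dropping $j^*$ only relaxes the constraints $h_i \geq \lambda$; I must show it is also achievable. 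Here I would invoke Lemma~\ref{lem:polytope} applied to the polytope for $B^t$: the feasible region of the $\bh$-variables is of the form $\tilde A \bh \leq \boldsymbol{\tilde b}$ with $\tilde A, \tilde b \geq 0$, together with $h_i \geq \lambda$ for $i \in B^t$. We have a point $\bg$ in this region (coming from $\by$) with $g_{j^*} > \lambda^t$ and $\min_{i \in B^t} g_i = \lambda^t$. Decrease the $j^*$-coordinate of $\bg$ down to $\lambda^t$, leaving all other coordinates fixed; call the result $\bg'$. Since $\tilde A \geq 0$, decreasing a coordinate keeps $\tilde A \bg' \leq \boldsymbol{\tilde b}$, and now $g'_i \geq \lambda^t$ for \emph{all} $i \in B^t$ — but crucially, since the constraint $h_{j^*} \geq \lambda$ is no longer needed to certify membership (we only need it for $B^t \setminus \{j^*\}$), the point $\bg'$ with $\lambda = \lambda^t$ shows $LP(B^t \setminus \{j^*\}, F^t, (\ell_i^t)_{i \in N})$ attains value $\lambda^t$. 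This contradicts minimality of $B^t$, completing the proof.

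One subtlety I expect to be the main obstacle: I must ensure that when I lower the $j^*$-coordinate I can still recover a full feasible point $(\bx, \bh, \lambda) \in P$ realizing $\bg'$ — that is, that the projection statement of Lemma~\ref{lem:polytope} genuinely lets me lift $\bg'$ back. This is exactly the downward-closedness of $Q$ on the $\bh$-variables that was proved inside Lemma~\ref{lem:polytope}: since $\bg' \leq \bg$ and $(\bg, \mu) \in Q$, there is $\lambda'$ with $(\bg', \lambda') \in Q$, and since $g'_i \geq \lambda^t$ for all $i \in B^t$ we may take $\lambda' = \lambda^t$. A second point to be careful about is the edge case where $j^*$ might equal the unique minimizer — but the first paragraph rules this out, since a strict inequality at the minimizer would force $\mu > \lambda^t$, contradicting optimality. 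With these two observations in hand the argument is routine; the real content is the interplay between the non-negativity of $\tilde A$ (so coordinatewise decreases preserve feasibility) and the minimality of $B^t$.
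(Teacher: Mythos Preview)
Your argument contains a fundamental direction error. You write that ``$\lambda^t$ is clearly an \emph{upper bound} on the objective of $LP(B^t \setminus \{j^*\}, \dots)$, since dropping $j^*$ only relaxes the constraints.'' But relaxing constraints in a maximization problem can only \emph{raise} the optimum, so $\lambda^t$ is a \emph{lower} bound on that objective, not an upper bound. The lower bound is the trivial direction; the construction you then give with $\bg'$ (lowering $g_{j^*}$ to $\lambda^t$ and invoking downward-closedness) merely re-establishes that $\lambda^t$ is achievable for $LP(B^t \setminus \{j^*\}, \dots)$, which was never in doubt.

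Worse, the claim you actually need---that $LP(B^t \setminus \{j^*\}, \dots)$ has objective \emph{exactly} $\lambda^t$---is precisely what minimality of $B^t$ rules out: by definition, removing \emph{any} agent from $B^t$ strictly increases the objective. So the existence of $\by$ with $g_{j^*} > \lambda^t$ simply does not contradict minimality along the route you propose; you have selected the wrong agent to remove and the wrong inequality to chase. The paper's proof works differently: it first locates a constraint $c$ in the projected polytope $Q$ that is tight at the point $\tilde h_i = \lambda^t$ (for $i \in B^t$), and then uses minimality of $B^t$ to argue that this tight constraint must have strictly positive coefficient $\tilde a^c_i$ on \emph{every} $i \in B^t$. (If some $\tilde a^c_j = 0$, the same constraint would still bind after removing $j$, forcing $LP(B^t \setminus \{j\}, \dots) \leq \lambda^t$ and contradicting minimality---this is the correct place where minimality enters, with the inequality going the right way.) Once all coefficients are positive, any $\by$ satisfying item~(3) with a strict inequality somewhere would give $\sum_{i \in B^t} \tilde a^c_i \, y_i(T_i(\ell_i^t)) > \tilde b^c$, violating membership in $Q$. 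That is the contradiction; your argument never reaches it.
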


\begin{proof}
Let $P$ be the polytope defined by the constraints in $LP(B^t, F^t, (\ell_i^t)_{i \in N})$. Let $Q = \{(\bh, \lambda) \mid \exists (\bx, \bh, \lambda) \in P)\}$ be the projection of $P$ on the $\bh$ and $\lambda$ variables. By Lemma \ref{lem:polytope}, there exists a non-negative matrix $\tilde A = [\tilde a^c_i]$ and a non-negative vector $\tilde b$ such that
\[Q = \begin{cases}
h_i \geq \lambda, \quad \forall i \in B^t\\
\tilde A \bh \leq \boldsymbol{\tilde{b}} \\
\bh, \lambda \geq 0
\end{cases}\]

By definition, $(\bx^t, \bh^t, \lambda^t)$ is an optimal solution to $LP(B^t, F^t, (\ell_i^t)_{i \in N})$. Let $\tilde \bh$ be defined as $\tilde{h}_i = \lambda^t$ for all $i \in B^t$ and $\tilde{h}_i = 0$ for all $i \notin B^t$. We now have $(\bx^t, \tilde \bh, \lambda^t)$ is also an optimal solution to 
$LP(B^t, F^t, (\ell_i^t)_{i \in N})$. Thus, $(\tilde \bh, \lambda^t)$ does not lie in the interior of the polytope $Q$. Thus at least one of the constraints $\tilde A \bh \leq \tilde b$ must be tight at this point, i.e., there exists a constraint $c$ such that  $\sum_{i \in B^t} \tilde a^{c}_{i} \tilde{h}_i = \tilde b^c$. 
We now consider two cases.

{\emph {Case 1}: $\tilde a^c_{i} > 0$ for all $i \in B^t$}.
 Suppose for contradiction that there exists a $\by$ that satisfies the premises of the lemma. For any agent $i \in N$, let $h'_i =  y_{i}(T_i(\ell_i^t))$, so we have $h'_i \geq \lambda^t$ for all $i \in B^t$ with at least one strict inequality. Thus since $(\by, \bh',\lambda^t)$ is a feasible solution to 
 $LP(B^t, F^t, (\ell_i^t)_{i \in N})$, we have $(\by, \bh',\lambda^t) \in P$. By definition of $Q$, we have $(\bh', \lambda^t) \in Q$. However, we have
\begin{align*}
    \sum_{i \in B^t} \tilde a^c_{i}h'_i > \sum_{i \in B^t} \tilde a^c_{i}\lambda^t = \sum_{i \in B^t} a^c_{i}\tilde{h}_i = \tilde b^c
\end{align*}
which is a contradiction to the statement that $(\bh', \lambda^t) \in Q$. 

{\emph {Case 2}: $\tilde a^c_{j} = 0$ for some $j \in B^t$}. Consider the set $B' = B^t \setminus \{j\}$. We now have $\sum_{i \in B'} \tilde a^c_i \tilde h_i = \tilde b^c$. 
Let $P'$ be the polytope defined by the constraints in $LP(B', F^t, (\ell_i^t)_{i \in N})$ and $Q'$ be its projection.
Since $B^t$ is the minimal bottleneck set, there exists a $(\by',h',\lambda') \in P'$ such that $\lambda' > \lambda^t$ and $h'_i = \lambda',\ \forall i \in B'$. Hence, we have
\begin{align*}
    \sum_{i \in B'} \tilde a^c_{i}h'_i > \sum_{i \in B'} \tilde a^c_{i}\lambda^t = \sum_{i \in B'} a^c_{i}\tilde{h}_i = \tilde b^c
\end{align*}
and thus $(h', \lambda') \notin Q'$. But, this contradicts the fact that $(\by',h',\lambda') \in P'$.
\end{proof}

We are now ready to prove that the \emph{constrained serial rule} algorithm finds a constrained ordinally efficient assignment.

\begin{theorem}
For any preference profile $\bsucceq$ and constraint set $\Delta \cC(\bsucceq)$, the outcome of Algorithm \ref{alg:genps} is constrained ordinally efficient.
\end{theorem}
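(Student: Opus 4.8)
The plan is to argue by contradiction: suppose $\bx$ (the output of Algorithm~\ref{alg:genps}) is stochastically dominated by some feasible $\by \in \Delta\cC(\bsucceq)$, i.e.\ $\by_i \, sd(\succeq_i) \, \bx_i$ for all $i$ with at least one strict domination. The key translation is that for each agent $i$, stochastic dominance $\by_i \, sd(\succeq_i)\, \bx_i$ means exactly that $y_i(T_i(\ell)) \geq x_i(T_i(\ell))$ for every threshold $\ell \le E(\succeq_i)$, and strict domination for agent $i$ means this inequality is strict for some $\ell$. So I would first set up this cumulative-share reformulation carefully, and then locate the ``first place'' where $\by$ does better than $\bx$.

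Here is the structure I would follow. For each agent $i$, let $r$ be the round in which $i$'s threshold $\ell_i^r$ equals the particular $\ell$ at which strict domination occurs (i.e.\ $i \in B^r$ and $\ell_i^r = \ell$ — recall $\ell_i$ is incremented only when $i$ enters the bottleneck set). Among all agents for which $\by$ strictly dominates $\bx$ at some threshold, pick the one whose corresponding round $r$ is \emph{smallest}; call this round $t$ and this agent $i_0$. The claim is that this contradicts Lemma~\ref{lem:bottleneck} applied at step $t$. To set this up I need two facts about $\bx$ and $F^t$: (a) $\bx$ satisfies all the promise constraints in $F^t$ — because at each round $r < t$ the mechanism adds $(i,\ell_i^r,\lambda^r)$ to $F$ only for $i \in B^r$, and by minimality of later rounds the final $\bx$ respects these (this is essentially what makes the LP feasible at each step, and one should note that $x_i(T_i(\ell_i^r)) = h_i^r \geq \lambda^r$ by constraints~\eqref{eq:lp_const1}--\eqref{eq:lp_const2}); and (b) for every agent $j \in B^t$, $x_j(T_j(\ell_j^t)) \geq \lambda^t$, again because $(j,\ell_j^t,\lambda^t)$ enters $F^{t+1}$ and is respected thereafter.

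Now I would verify that $\by$ contradicts Lemma~\ref{lem:bottleneck} at step $t$. The lemma forbids any $\by$ with $A\by \le b$, $\by \in \Delta\cD$, satisfying all constraints of $F^t$, and with $y_j(T_j(\ell_j^t)) \geq \lambda^t$ for all $j \in B^t$ with at least one strict inequality. Feasibility ($A\by \le b$, $\by \in \Delta\cD$) holds since $\by \in \Delta\cC(\bsucceq)$. For the $F^t$ constraints: each triple $(i,\ell,\gamma) \in F^t$ was added at some round $r < t$ with $\gamma = \lambda^r$, $\ell = \ell_i^r$, and $x_i(T_i(\ell)) \ge \gamma$; since $r < t$, by the minimality of $t$ agent $i$ is \emph{not} strictly dominated at threshold $\ell$, so $y_i(T_i(\ell)) \ge x_i(T_i(\ell)) \ge \gamma$ — wait, I must be careful: minimality of $t$ only rules out strict domination at the threshold whose round is $< t$; since threshold $\ell = \ell_i^r$ has round $r<t$, indeed $y_i(T_i(\ell)) \ge x_i(T_i(\ell)) \ge \gamma$, so all $F^t$ constraints hold. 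For the bottleneck condition at step $t$: for every $j \in B^t$, $y_j(T_j(\ell_j^t)) \ge x_j(T_j(\ell_j^t)) \ge \lambda^t$ (using stochastic dominance and fact (b)), and for $j = i_0$ this is strict since $i_0$ is strictly dominated precisely at threshold $\ell_{i_0}^t$. Thus $\by$ satisfies all the conditions the lemma declares impossible — contradiction, so $\bx$ is constrained ordinally efficient.

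The main obstacle I anticipate is the bookkeeping around which threshold/round witnesses strict domination and making the ``minimal round'' selection genuinely well-founded — in particular confirming that strict stochastic domination of $\bx_{i_0}$ really does manifest at a threshold of the form $\ell_{i_0}^t$ for some round $t$ (it does, since the thresholds $\{\ell_i^r\}_r$ for agent $i$ range over $1,\dots,E(\succeq_i)$ as $r$ varies, covering all cumulative shares that matter), and that ``$i$ not strictly dominated at threshold $\ell$ with round $< t$'' is correctly extracted from the minimality choice. A secondary point needing care is justifying that the final output $\bx$ satisfies $x_i(T_i(\ell_i^r)) \ge \lambda^r$ for all past rounds $r$ — i.e.\ that the promises in $F$ are kept by the terminal assignment, not just by the round-$r$ LP solution. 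This follows because $F^{t+1} \supseteq F^t$ for all $t$ and the terminating LP at $\lambda = 1$ still enforces constraint~\eqref{eq:lp_const5} for the final $F$, so $\bx$ respects every promise ever made.
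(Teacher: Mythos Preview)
Your proposal is correct and follows essentially the same route as the paper: assume a feasible $\by$ that stochastically dominates the output $\bx$, identify the earliest round $t$ at which some bottleneck agent is strictly improved, and derive a contradiction with Lemma~\ref{lem:bottleneck} by checking that $\by$ satisfies feasibility, the promises in $F^t$, and the bottleneck inequalities with one strict. The paper's write-up is terser (it asserts $x_i(T_i(\ell_i^t)) = \lambda^t$ for $i \in B^t$ and uses ``first step where $\bx'$ differs'' rather than ``smallest round witnessing strict improvement''), but your more careful bookkeeping---that every threshold $\ell < E(\succeq_i)$ appears as $\ell_i^t$ for some $t$ with $i \in B^t$, and that the terminal $\bx$ respects all accumulated promises via constraint~\eqref{eq:lp_const5}---fills in exactly the details the paper leaves implicit.
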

\begin{proof}
 Let $\bx$ be the solution of the algorithm for any preference profile $\bsucceq$ and constraint set $\Delta \cC(\bsucceq)$. In order to prove that $\bx$ is constrained ordinally efficient it suffices to show that if there exists $\bx' \in \Delta \cC(\bsucceq)$ such that $\bx'_i sd(\succeq_i) \bx_i$ for all $i \in N$, then $x'_{i}(T_i(\ell)) = x_{i}(T_i(\ell))$, for all $i \in N$, for all $\ell \in \{1,2,...,E(\succeq_i)\}$. We prove this using contradiction. 

For any round $t$, let $(x^t, h^t, \lambda^t)$ denote the optimal solution to $LP(B^t, F^t, (\ell_i^t)_{i \in N})$. For any agent $i$ in the bottleneck set $B^t$, the mechanism fixes the cumulative allocation received by agent $i$ for her top $\ell_i^t$ indifference classes. Hence we have $x_i(T_i(\ell_i^t)) = x^t_i(T_i(\ell_i^t)) = \lambda^t$. Towards a contradiction, let $t$ be the first step in the algorithm such that $x'_{j} (T_j(\ell^t_j)) \neq x^t_{j}(T_j(\ell^t_j))$ for some agent $j \in B^t$. Since $\bx'_i sd(\succeq_i) \bx_i$ for all $i \in N$, it must be that $x'_{j}(T_j(\ell^t_j)) >  x_{j}(T_j(\ell^t_j)) = x^t_{j}(T_j(\ell^t_j)) = \lambda^t$. Also, we have $x'_{i}(T_i(\ell^t_i)) \geq  x_{i}(T_i(\ell^t_i)) = \lambda^t$ for all agents $i \in B^t$ with $i \neq j$. Further, since $\bx'$ is a feasible random assignment, we have $\bx' \in \Delta \cD$ and $A\bx' \leq \boldsymbol{b}$. Lastly, since round $t$ is the first time $\bx'$ differs from $\bx$, $x'_i(T_i(\ell)) = x_i(T_i(\ell)) \geq \gamma$, $\forall (i,\ell,\gamma) \in F^t$. This is in direct contradiction to Lemma \ref{lem:bottleneck}. Therefore, $\bx$ is constrained ordinally efficient.
\end{proof}

In the presence of arbitrary constraints, the existence of envy maybe inevitable between any pair of agents if one agent is more constrained than the other. However, as the next theorem shows we can guarantee envy-freeness among any pair of agents of identical type.

\begin{theorem}
At any preference profile $\bsucceq$, the constrained serial rule guarantees envy-freeness among agents of the same type.
\end{theorem}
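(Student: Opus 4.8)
Fix two agents $i,j$ of the same type and let $\bx$ be the output of Algorithm~\ref{alg:genps}; I will show $x_i(T_i(\ell))\ge x_j(T_i(\ell))$ for every $\ell\le E(\succeq_i)$, which is exactly $\bx_i\,sd(\succeq_i)\,\bx_j$. Let $\ell^*$ be the final indifference threshold reached by agent $i$. For $\ell\ge\ell^*$ the algorithm leaves $x_i(T_i(\ell^*))=1$ (when it stops, $\lambda=1$ forces $h_i=x_i(T_i(\ell^*))=1$), so $x_i(T_i(\ell))=1\ge x_j(T_i(\ell))$ and there is nothing to prove. For $\ell<\ell^*$ agent $i$ lies in $B^{t}$ at exactly one round $t=t(i,\ell)$ with $\ell^t_i=\ell$. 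Two facts are used throughout: (a) $\bx$ satisfies every constraint in $F^{t}$, since $F$ only grows along the run and $\bx$ solves the last linear program; (b) the LP values are non-decreasing in the round index (the previous round's solution is feasible for the current LP), so if a triple $(k,m,\gamma)$ entered $F$ at round $s\le t$ then $\gamma=\lambda^{s}\le\lambda^{t}$. In particular $x_i(T_i(\ell))\ge\lambda^{t}$ and $x_k(T_k(\ell^t_k))\ge\lambda^{t}$ for every $k\in B^{t}$.

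Suppose toward a contradiction that $x_i(T_i(\ell_0))<x_j(T_i(\ell_0))$ for some $\ell_0<\ell^*$; take $\ell_0$ minimal and write $t=t(i,\ell_0)$, $S=T_i(\ell_0)$. The plan is to contradict Lemma~\ref{lem:bottleneck} at round $t$ by exhibiting a feasible $\by$ that strictly beats $\lambda^t$ for agent $i$ without hurting any other agent of $B^{t}$. I construct $\by$ by a small \emph{$i\leftrightarrow j$ transfer}: for a tiny $\Delta>0$, shift a total of $\Delta$ units of probability from $j$ to $i$ on objects inside $S$ and a total of $\Delta$ units from $i$ to $j$ on objects outside $S$, keeping every column sum and the row sums of $i$ and $j$ intact. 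Such a $\Delta>0$ transfer exists because $x_j(S)>x_i(S)\ge 0$ gives $x_j(S)>0$ and $x_i(O\setminus S)=1-x_i(S)>0$. The crucial point — and why "same type" is the right hypothesis — is that since $a^c_{i,o}=a^c_{j,o}$ for all $c$ and $o$, any symmetric $i\leftrightarrow j$ transfer leaves $A\by$ unchanged; hence $A\by=A\bx\le\boldsymbol b$ and $\by\in\Delta\mathcal D$, so $\by$ is feasible. By construction $y_i(S)=x_i(S)+\Delta>x_i(S)\ge\lambda^{t}$, which will supply the strict inequality required in Lemma~\ref{lem:bottleneck}(3).

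It remains to verify the remaining hypotheses of Lemma~\ref{lem:bottleneck}. Agents $k\in B^{t}\setminus\{i,j\}$ and their $F^{t}$-constraints are untouched, so $y_k(T_k(\ell^t_k))=x_k(T_k(\ell^t_k))\ge\lambda^{t}$ and all their promises still hold. For agent $i$'s own earlier promises $(i,\ell',\gamma')\in F^{t}$ one has $\ell'<\ell_0$, hence $T_i(\ell')\subseteq S$; the transfer only adds to $i$'s probability inside $S$ and removes it outside $S$, so $y_i(T_i(\ell'))\ge x_i(T_i(\ell'))\ge\gamma'$. The one delicate requirement — which I expect to be the main obstacle — is preserving agent $j$'s earlier promises $(j,m,\gamma)\in F^{t}$ (and, if $j\in B^{t}$, the constraint $y_j(T_j(\ell^t_j))\ge\lambda^{t}$). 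I would handle this by \emph{directing} the transfer so that $j$'s probability is withdrawn from the lowest-$\succeq_j$-ranked objects of $S$ and returned to $j$ on the highest-$\succeq_j$-ranked objects of $O\setminus S$ (and dually for $i$), so that no upper set $T_j(m)$ loses net mass; this keeps $y_j(T_j(m))\ge x_j(T_j(m))\ge\gamma$ unless all of $j$'s probability on $S$ already lies inside $T_j(m)$ while $i$ has no probability on $T_j(m)\setminus S$. But in that residual case $x_j(T_j(m))\ge x_j(S)>x_i(S)\ge\lambda^{t}\ge\gamma$, so there is strictly positive slack $x_j(T_j(m))-\gamma$ available to absorb the transfer. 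Choosing $\Delta$ smaller than every such slack (and small enough for nonnegativity), $\by$ satisfies conditions (1)--(3) of Lemma~\ref{lem:bottleneck} — feasibility, the $F^{t}$-constraints, and $y_k(T_k(\ell^t_k))\ge\lambda^{t}$ for all $k\in B^{t}$ with strict inequality for $i$ — contradicting the lemma. Hence no such $\ell_0$ exists, and $\bx$ is envy-free among agents of the same type.
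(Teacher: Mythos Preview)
Your proof is correct and takes essentially the same route as the paper: both locate the round $t$ at which $i\in B^t$ with $\ell_i^t=\ell_0$, construct a feasible $\by$ via an $i\leftrightarrow j$ probability transfer (using the same-type hypothesis to get $A\by=A\bx$), and then contradict Lemma~\ref{lem:bottleneck}. The only difference is executional: the paper splits on whether $j\in B^t$ and in each case pinpoints a single object $o\in S$ with $x_{j,o}>0$ that already lies outside every relevant upper set $T_j(\cdot)$ of $j$ (namely $o\notin T_j(\ell_j^t)$ in Case~1, $o\notin T_j(\ell_j^{t-1})$ in Case~2), so a simple two-object swap $(o,p)$ suffices, whereas you use a directed multi-object transfer together with a slack argument for the residual case.
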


\begin{proof}
Let $\bx$ denote the outcome of the algorithm. Consider any pair of agents $i$ and $j$ that are of the same type. Suppose for contradiction that agent $i$ envies $j$, i.e. there exists an indifference class $\ell < E(\succeq_i)$ such that $x_{i}(T_i(\ell)) < x_{j}(T_i(\ell))$. Let $t$ denote the step of the algorithm when $\ell_i^t = \ell$ and agent $i$ is in the bottleneck set $B^t$. Let $(\bx^t, \bh^t, \lambda^t)$ denote the optimal solution to $LP(B^t, F^t, (\ell_i^t)_{i \in N})$. Since agent $i$ is in the bottleneck set $B^t$, the mechanism fixes the cumulative allocation received by agent $i$ for her top $\ell$ indifference classes. Hence we have $x_i(T_i(\ell)) = x^t_i(T_i(\ell)) = \lambda^t$.

Let's consider two cases:

\emph{Case 1: } Suppose agent $j \in B^t$. We have  $x_j(T_j(\ell^t_j))= x_j^t(T_j(\ell^t_j)) = \lambda^t$. However, as $x_j(T_i(\ell)) > \lambda^t$, there must exist some object $o \in T_i(\ell) \setminus T_j(\ell_j^{t})$ such that $x_{j,o} > 0$. 

\emph{Case 2: } Suppose agent $j \notin B^t$. By construction, for any triple $(j, \ell', \gamma) \in F^t$, we have $\ell' \leq \ell_j^{t-1}$ and  $x_j(T_j(\ell')) = \gamma \leq \lambda^t$. Thus, there must exist some object $o \in T_i(\ell) \setminus T_j(\ell_j^{t-1})$ such that $x_{j,o} > 0$.

In either case, since $x_i(T_i(\ell)) = \lambda^t < 1$, there exists some object $p \notin T_i(\ell)$ where $x_{i,p} > 0$. Let $0 < \eps < \min\{x_{j,o}, x_{i,p}\}$ be some fixed constant.
We can now define a new outcome $\by$ as follows. Let
$y_{i',o'} = x_{i',o'}$ for all objects $o' \in O$ and agents $i' \notin \{i,j\}$. For agents $i' \in \{i,j\}$, let $y_{i',o'} = x_{i',o'}$ for all objects $o' \notin \{o,p\}$. Let $y_{i,o} = x_{i,o} + \eps$, 
    $y_{i,p} = x_{i,p} - \eps$, and 
    $y_{j,o} = x_{j,o} - \eps$,
    $y_{j,p} = x_{j,p} + \eps$. Since agents $i$ and $j$ are of the same type and we have $y_{i,o'} + y_{j,o'} = x_{i,o'} + x_{j,o'}$ for all objects $o' \in O$, we must have $A \by = A \bx \leq \bb$. Further, by construction we have $\by \in \Delta \mathcal{D}$, i.e. $\by$ is a feasible outcome.
    In addition, by our choice of objects $o$ and $p$, we have $y_k(T_k(\ell)) \geq x_k(T_k(\ell)) \geq \gamma$ for all $(k,\ell,\gamma) \in \mathcal{F}^t$ and also $y_k(T_k(\ell_k^t)) \geq x_k(T_k(\ell_k^t))$ for all $k \in B^t$. However, since $y_i(T_i(\ell_i^t)) > x_i(T_i(\ell_i^t)) \geq \lambda^t$, this contradicts Lemma \ref{lem:bottleneck}.
\end{proof}

\subsection{Computational Complexity}
\label{sec:complexity}
As shown in Proposition \ref{prop:terminates}, the mechanism terminates in at most $n \rho$ rounds where $n$ and $\rho$ denote the number of distinct agents and objects respectively. In each round $t$, the algorithm solves one instance of the linear program to compute the value of $\lambda^t$. Further, the algorithm needs to find a set of bottleneck agents $B^t$. We now show that $B^t$ can be found in polynomial time by solving a sequence of at most $n$ linear programs.

We recall that $B^t$ is defined as any minimal set of agents such that the objective value of $LP(B^t, F^t, (\ell_i^t)_{i \in N})$ equals $\lambda^t$. Algorithm \ref{alg:bottleneck} provides a simple iterative procedure to find such a minimal set. We first initialize $B^t$ to be the set of all agents. In each step, the algorithm considers removing an agent $i$ from $B^t$. If removing such an agent allows the linear program to obtain a higher objective value, then clearly agent $i$ must belong to the bottleneck set. On the other hand, if the linear program obtains an objective value of only $\lambda^t$, then agent $i$ can be safely removed from consideration since by definition $B^t \setminus \{i\}$ is a smaller candidate set. 

\begin{algorithm}[htb]
\KwIn{$F^t, (\ell_i^t)_{i \in N}, \lambda^t$ as defined in Algorithm \ref{alg:genps}}
\KwResult{$B^t$: Set of bottleneck agents}
$B^t \leftarrow N$\;
\For{$i = 1, 2, \ldots, n$}
{
$\lambda \leftarrow $ objective value of $LP(B^t \setminus \{i\}, F^t, (\ell_i^t)_{i \in N})$\;
\If{$\lambda == \lambda^t$}
{$B^t \leftarrow B^t \setminus \{i\}$}
}
Return $B^t$
\caption{Procedure to find the bottleneck set}
\label{alg:bottleneck}
\end{algorithm}

Algorithm \ref{alg:bottleneck} terminates in at most $n$ iterations and thus in total each round of the constrained serial rule requires solving at most $(n+1)$ linear programs. Algorithm \ref{alg:genps} can thus be executed in time that is polynomial in the size of the constraints, number of agents and objects.

\subsection{Implementability}
\label{sec:implementability}
Randomization in object allocation mechanisms is often used as a tool to incorporate fairness from an ex-ante perspective. The outcome of the random assignment mechanism is treated as a probability distribution over deterministic outcomes and an outcome drawn from this distribution is what gets implemented in practice. By the Birkhoff-von Neumann theorem, it is well known that every bistochastic random assignment can be implemented efficiently as a lottery over feasible deterministic assignments, i.e., a deterministic assignment where every agent is assigned one object and each object is assigned to one agent.
However, in the presence of arbitrary constraints on the random assignment, such a decomposition into a lottery over deterministic assignments that satisfy those constraints may not exist. The following example illustrates such a situation.

\begin{example}
Consider a simple example with one agent, $N = \{1\}$, and three objects $O = \{a,b,c\}$. In the presence of constraints $x_{1,a} + x_{1,b} \leq 2/3$, $x_{1,b} + x_{1,c} \leq 2/3$, and $x_{1,a} + x_{1,c} \leq 2/3$, a potential feasible solution to the random assignment problem is to set $x_{1,a} = x_{1,b} = x_{1,c} = 1/3$. However, it can be readily seen that there exists no deterministic assignment $X$ that satisfies all three constraints and still obtains $X_{1,a} + X_{1,b} + X_{1,c} = 1$.
\end{example}

In many practical applications of constrained object allocation, one can show that any random assignment satisfying these constraints can be represented as a lottery over deterministic assignments that are approximately feasible. For example, for the combinatorial assignment problem with limited complementarities, \citet{nguyen2016assignment} show that one can always decompose a feasible random assignment into a lottery over deterministic assignments where the capacity of each object is violated by at most additive $k$ where $k$ denotes the size of the largest bundle. We discuss such implementation details where applicable in the specific applications in Section \ref{sec:applications}.

On the other hand, our model of imposing constraints on the random assignment solution generalizes the approach of imposing constraints on the ex-post deterministic outcomes. As shown by \citet{balbuzanov2019constrained}, any set of arbitary constraints on the ex-post outcomes can be represented by a set of linear inequalities on the random assignment. While such a reduction always exists, we note that it may not be computationally efficient.

\section{Applications}
\label{sec:applications}
In this section, we discuss how the constrained serial rule can be applied for several concrete applications of constrained object allocation. 

\subsection{Unconstrained Object Allocation}
We can apply our model to the unconstrained object assignment problem by simply setting $\Delta \cC(\bsucceq) = \Delta \mathcal{D}$ for all $\bsucceq \in \bbR^n$. In this case, the random assignment output by the \emph{constrained serial rule} coincides with that given by the extended probabilistic serial algorithm of \cite{katta2006solution}. 

We first briefly discuss the extended probabilistic serial algorithm. At every round $t$, the extended probabilistic serial algorithm constructs a flow network where agents point to their most preferred objects among the set of objects available at that round. Using the parametric max-flow algorithm, the algorithm then identifies a bottleneck set of agents $X$ to be those that satisfy:
\begin{align*}
    X = \argmin_{Y \subseteq N}\frac{|\Gamma(Y)|}{|Y|}
\end{align*}
where $\Gamma(Y)$ denotes the set of objects that are most preferred by atleast one agent in $Y$. 
We note that this set of agents $X$ is precisely the most constrained set in this round, i.e., an agent $i \in X$ can get exactly $\frac{|\Gamma(X)|}{|X|}$ (and not more) total probability share from her preferred objects. In the constrained serial rule algorithm, since there are not other constraints restricting the random assignments, this same set $X$ of agents will be chosen as the bottleneck set $B^t$.

\subsection{Bi-hierarchical Constraints}

\citet{budish2013designing} considers the object allocation problem with general quota constraints and identified a large class of constraints called `bi-hierarchical constraints' that are universally implementable, i.e., any random assignment satisfying these constraints can be implemented as a lottery over deterministic assignments that satisfy the same constraints.

A constraint in their setup is of the form $\underline{q}_S \leq \sum_{(i,o) \in S} \overline{x}_{i,o} \leq \overline{q}_S$, where $S \subseteq N \times O$ is a set of agent-object pairs, $\overline{\bx}$ is a deterministic assignment, and $\underline{q}_S,\overline{q}_S$ are both integers. A constraint structure $\mathcal{H}={(S,\boldsymbol{q}_S)}$ comprises of collection of such constraints. An additional requirement on $\mathcal{H}$ is that it must include all singleton sets. 
A constraint structure $\mathcal{H}$ is a hierarchy if for every $S,S' \in \mathcal{H}$, either $S \subseteq S'$ or $S' \subseteq S$ or $S \cap S' = \emptyset$. Finally, $\mathcal{H}$ is a bi-hierarchy if there exist hierarchies $\mathcal{H}_1, \mathcal{H}_2$ such that $\mathcal{H} = \mathcal{H}_1 \cup \mathcal{H}_2$ and $\mathcal{H}_1 \cap \mathcal{H}_2 = \emptyset$. \citet{budish2013designing} proposed the \emph{Generalized Probabilistic Serial} mechanism for object allocation with bi-hierarchical quota constraints. However, their mechanism only works for upper-bound quotas and assumes that all lower bounds $\underline q_S = 0$.

In contrast, we can directly use the inequalities in the constraint structure $\mathcal{H}$ in our \emph{constrained serial rule} algorithm (even in the presence of indifferences in the preference relations) by defining:

\[\Delta \cC(\bsucceq) = \{\bx \in \Delta \cD \mid \underline{q}_S \leq \sum_{(i,o) \in S} x_{i,o} \leq \overline{q}_S \text{ for all } (S,\boldsymbol{q}_S) \in \mathcal{H}\} \text{ for all } \bsucceq \in \bbR^n\]
Thus our algorithm generalizes the approach of Budish et al.\ even for lower bound quota constraints. The key technical innovation that allows us to do so lies in the fact that our algorithm \emph{looks ahead} in time to ensure that the partial solution obtained at any time leads to a feasible random assignment.

\subsection{Type-Dependent Distributional Constraints}
\cite{ashlagi2020assignment} study type dependent distributional constraints that do not conform to a bi-hierarchical structure. In this setup, every agent $i \in N$ is associated with a type $t_i \in T$ where $T$ denotes a finite set of types. Let $R \subseteq T$ denote an arbitrary set of agent types and let $o \in O$ denote an arbitrary object. A single constraint is of the form $\underline q_{R,o} \leq \sum_{i \in N \mid t_i \in R} x_{i,o} \leq \overline q_{R,o}$, i.e., the mechanism imposes floor and ceiling quotas on the total allocation of all agents belonging to a specific set of agent types at a given object.

As earlier, it can be readily seen that such distributional constraints can be easily represented in our framework. Since the constraints do not conform to a bi-hierarchical structure, the outcome of our mechanism cannot always be implemented as a lottery over feasible deterministic assignments. However, as shown by \citet{ashlagi2020assignment}, any random assignment satisfying these constraints can be decomposed into a distribution over almost feasible deterministic outcomes where every floor and ceiling constraint is violated by at most $|T|$.

\subsection{Explicit Ex-post Constraints}
\cite{balbuzanov2019constrained} considers the problem of random object assignment when we are given an explicit list of ex-post feasible allocations and the random assignment must be implemented as a lottery over these allocations. For a preference profile $\bsucceq$, let $C(\bsucceq)$ be the set of all permissible deterministic assignments and $\Delta \cC(\bsucceq)$ be the convex hull of the set $\cC$. He shows that for every $C(\bsucceq)$, there exists a minimal set of constraints parameterized by the matrix $A$, with $a^c_{i,o} \geq 0$ for all $(i,o) \in N \times O$ and constraint $c$, and the vector $\boldsymbol{b} \geq \boldsymbol{0}$ such that $\Delta \cC(\bsucceq) = \{\bx \in \Delta \cD \mid A\bx \leq \boldsymbol{b}\}$. He generalizes the probabilistic serial mechanism to incorporate these inequalities for the case when agents have strict preferences. The \emph{constrained serial rule} algorithm generalizes his mechanism to the full preference domain.

\subsection{Combinatorial Assignment}
Another class of problems where our mechanism can be applied to is the problem of allocating bundles of indivisible objects to agents when preferences exhibit complementarities (\cite{budish2011combinatorial,budish2012multi,nguyen2016assignment}). Formally, let $G$ be an underlying set of objects, where each object $g \in G$ is supplied in $q_g$ copies. A bundle of objects can be represented by a vector in $\mathbb{N}_{+}^{|G|}$, where the $t$th co-ordinate of this vector corresponds to the number of copies of the object $t$ and $\mathbb{N}_{+} = \mathbb{N} \cup \{0\}$. Let $O=\{o \in \mathbb{N}_{+}^{|G|} \mid \sum_{g \in G} o_{g} \leq k\}$ now be the set of all bundles of size at most $k$. We assume that each bundle is available in a single copy. Each agent $i \in N$ is interested in consuming one bundle from the set $O$ and has a complete and transitive preference $\succeq_i$ on the set $O$. A common application that fits in this class of problems is the course allocation problem. Every student is to be assigned a schedule of at most $k$ courses, where each course $g$ has a finite number of seats $q_g$. 

The set of feasible random allocations can be described by the following set of constraints that enforce that the total amount allocated of any object $g \in G$ is at most its supply $q_g$. At every preference profile $\bsucceq$, we have:
\begin{align*}
    \Delta \cC(\bsucceq) = \Delta \cC = \{ x \in \Delta \cD \mid \sum_{i \in N}\sum_{o \in O} o_g \cdot x_{i,o} \leq q_g, \quad \forall g \in G\} 
\end{align*}

From Definition \ref{def:sametype}, it is easy to see that all agents under these feasibility constraints are of the same type. Therefore, the \emph{constrained serial rule} guarantees constrained ordinally efficient and envy-free outcomes. Further, as discussed by \citet{nguyen2016assignment}, any outcome of the mechanism can be implemented as a lottery over deterministic assignments that violate the supply constraints by at most $k-1$.

\newpage

\bibliography{bibliography}

\end{document}